 \let\mathscr\relax% just so we can load this and rsfs
\makeatletter \@addtoreset{equation}{section} \makeatother
\renewcommand{\eprint}[1]{\href{https://arxiv.org/abs/#1}{#1}}
\def\ee{\end{eqnarray}}
\newcommand{\ellSN}{\mathop{\operator@font sn}\nolimits}
\newcommand{\ellCN}{\mathop{\operator@font cn}\nolimits}
\newcommand{\ellDN}{\mathop{\operator@font dn}\nolimits}
\newcommand{\ellAM}{\mathop{\operator@font am}\nolimits}
\newcommand{\ellK}{\mathop{\smash{\operator@font K}\vphantom{a}}\nolimits}
\newcommand{\ellE}{\mathop{\smash{\operator@font E}\vphantom{a}}\nolimits}
\newcommand{\beq}{\begin{equation}}
\newcommand{\eeq}{\end{equation}}
\newtheorem{theorem}{Theorem}[section]
\newtheorem{proposition}[theorem]{Proposition}
\newtheorem{definition}[theorem]{Definition}
\def\mr@ignsp#1 {\ifx\:#1\@empty\else #1\expandafter\mr@ignsp\fi}%
\newcommand{\multiref}[1]{\begingroup%\let\protect\string%
\xdef\mr@no@sparg{\expandafter\mr@ignsp#1 \: }%
\def\mr@comma{}%
\@for\mr@refs:=\mr@no@sparg\do{\mr@comma\def\mr@comma{,}\ref{\mr@refs}}%
\endgroup}
\newcommand{\hypref}[2]{\ifx\href\asklfhas #2\else\href{#1}{#2}\fi}
\newcommand{\figref}[1]{Fig.~\multiref{#1}}
\renewcommand{\eqref}[1]{(\multiref{#1})}
\def\[{\begin{equation}}
\def\]{\end{equation}}
\def\<{\begin{eqnarray}}
\def\>{\end{eqnarray}}
\asklfhas\newcommand{\href}[2]{#2}\fi
\title{Double Inozemtsev Limits of the Quantum DELL System}
\author{Alexander Gorsky}
\address[Alexander Gorsky]{\newline
          Institute for Information Transmission Problems, \newline
          Russian Academy of Sciences, \newline
          Moscow, Russia \newline
          Email: \href{mailto:shuragor@mail.ru}{shuragor@mail.ru}
          \newline \href{http://iitp.ru/en/users/3804.htm}{http://iitp.ru/en/users/3804.htm}}
\author[P. Koroteev]{Peter Koroteev}
\address{
Department of Mathematics,
University of California,
Berkeley, CA 94720, USA
and
Rutgers University, Piscataway, NJ, 08854, USA\newline
 Email: \href{mailto:pkoroteev@math.berkeley.edu}{pkoroteev@math.berkeley.edu}\newline
 \href{https://math.berkeley.edu/~pkoroteev/}{https://math.berkeley.edu/$\sim$pkoroteev}
}
\author{Olesya Koroteeva}
\address[Olesya Koroteeva]{\newline
School of Physics and Astronomy,\newline
University of Minnesota\newline
Minneapolis MN 55455\newline
United States of America\newline
Email: \href{mailto:koroteeva@physics.umn.edu}{koroteeva@physics.umn.edu}\newline
\href{https://www.physics.umn.edu/people/koroteeva.html}{https://www.physics.umn.edu/people/koroteeva.html}}
\author{Shamil Shakirov}
\address[Shamil Shakirov]{\newline
University of Geneva,\newline
Department of Mathematics,\newline
1211 Geneva 4, Switzerland\newline
Email: \href{mailto:Shamil.Shakirov@unige.ch}{Shamil.Shakirov@unige.ch}}
\date{\today}
\begin{document}

\begin{abstract}
In this letter we study various Inozemtsev-type limits of the quantum double elliptic (DELL) system when both elliptic parameters are sent to zero at different rates, while the coupling constant is sent to infinity, such that a certain combination of the three parameters is kept fixed. We find a regime in which such double Inozemtsev limit of DELL produces the elliptic Ruijsenaars-Schneider (eRS) Hamiltonians albeit in an unconventional normalization. We discuss other double scaling limits and anisotropic scaling of coordinates and momenta. In addition we provide a formal expression for the eigenvalues of the eRS Hamiltonians solely in terms of their eigenfunctions. 
\end{abstract}

\maketitle
%\setcounter{tocdepth}{1}
%\tableofcontents

\section{The DELL System}
The double elliptic model (DELL) is a remarkably sophisticated integrable system which has been a point of interest of many researchers during the last couple of decades. Since the advent of the Seiberg-Witten solution \cite{Seiberg:1994rs,Seiberg:1994aj} of the low energy sector of $\mathcal{N}=2$ supersymmetric gauge theories in four dimensions and its connection to classical  integrable systems  \cite{Gorsky:1993dq,Gorsky:1995zq,Martinec:1995by,Donagi:1995cf} it became clear that there must exist integrable systems which are dual to five and six dimensional gauge theories compactified on a circle and on a torus respectively. These generalization have been later studied in the literature (see \cite{Koroteev:2019tz} for review).

The classical DELL model was introduced in \cite{Hollowood:2003cv,Braden:2001yc,Braden:2003gv} using six dimensional gauge theories with eight supercharges compactified on a torus. It was shown how to construct the spectral curve, or, equivalently, the Seiberg-Witten curve of the  dual 6d theory, from M-theory. The two elliptic parameters of DELL, $p$ and $w$ are related to the gauge coupling of the 6d theory and to the elliptic modulus of the compactification torus respectively.

The quantization is an art as well as a science and thus it requires novel ideas. Nekrasov and Shatashvili \cite{Nekrasov:2009rc} suggested  a physics approach which utilizes the duality between integrable systems and gauge theories with eight supercharges. The key idea is to deform the gauge theory by the Omega-background along one of the complex directions of the Euclidian $\mathbb{R}^4$ worldvolume of the theory. The corresponding equivariant parameter becomes the Planck's constant of the dual quantum integrable system.

In \cite{Bullimore:2015fr} it was demonstrated, and later proved by Nekrasov (in the case of the four-dimensional $U(2)$ theory), see 
\cite{Nekrasov:2017ab,Nekrasov:2017aa}, how to find a formal spectrum of quantum elliptic integrable systems of Calogero or Ruijsenaars type. The wave functions of the quantum Hamiltonians are the supersymmetric partition functions in the presence of codimension-two defects of the dual 4d and 5d theories with adjoint matter respectively. The corresponding eigenfunctions are therefore represented by vacuum expectation values of local chiral observables (in 4d) and Wilson lines (in 5d). Later in  \cite{Koroteev:2019tz} this approach was extended to 6d theories.

\subsection{Overview of Quantum DELL System}
First we review the basics of the quantum double elliptic system which was discovered by two of the authors \cite{Koroteev:2019tz} and further developed in \cite{Grekov:2021zqq,Grekov:2020jpk}. (see also  \cite{Braden:1999aj,Mironov:1999vi,Aminov:2013asa,Aminov:2014wra,Aminov:2016ruk,Aminov:2017dud, Fock:1999ae, Braden:2001yc} for different approach to double elliptic models).

The DELL Hamiltonians for $N$ particles read
\begin{equation}
\mathcal{H}_a =\mathcal{O}_0^{-1} \mathcal{O}_a\,, \qquad a = 1, \ldots, N-1\,,
\label{eq:DELLHAMs}
\end{equation}
where operators $\mathcal{O}_0,\mathcal{O}_1,\dots,\mathcal{O}_{N-1}$ are Fourier modes of the following current
\begin{equation}
\mathcal{O}(z) \ = \ \sum\limits_{n \in {\mathbb Z}} \ \mathcal{O}_n \ z^n \ = \ \sum\limits_{n_1, \ldots, n_N = -\infty}^{\infty} \ (-z)^{\sum n_i} \ w^{\sum \frac{n_i(n_i - 1)}{2}} \ \prod\limits_{i < j} \theta\left( t^{n_i - n_j} \frac{x_i}{x_j}\Big\vert p \right) \  p_1^{n_1} \ldots p_N^{n_N}\,.
\label{eq:Ocurrent}
\end{equation}
In the above formula $\theta(x|p)$ is the odd theta function\footnote{The use of the odd theta function is more preferable to study the Inosemtsev limit; we changed our conventions from \cite{Koroteev:2019tz}.}
\begin{equation}\label{eq:ThetaOdd}
\theta(x|p)=(x^{\frac{1}{2}}-x^{-\frac{1}{2}})\prod_{i=1}^\infty (1-xp^i)(1-x^{-1}p^i)\,,
\end{equation}
$t$ is the exponentiated coupling constant, $z$ is an auxiliary counting parameter, and the canonically conjugate position and momentum operators obeying canonical q-commutation relation $x_i p_j=q^{\delta_{ij}}p_jx_j$ which act on functions of positions as follows
\begin{equation}
x _i f(x_1, \ldots, x_N) = x_i f(x_1, \ldots, x_N), \qquad p  _i f(x_1, \ldots, x_N) = f(x_1, \ldots, q x_i, \ldots, x_N)\,.
\end{equation}

The eigenvalue problem, which as of this writing, is a mathematical conjecture, states that a properly normalized equivariant elliptic genus of the affine Laumon space in the Nekrasov-Shatashvili limit \cite{Nekrasov:2009rc} is the eigenfunction of the quantum DELL Hamiltonians
\begin{equation}
\widehat{\mathcal{H}}_n \mathscr{Z}^{6d/4d}_{\text{inst}}(w,p,\textbf{x}) = \lambda_n(\textbf{a},w,p) \mathscr{Z}^{6d/4d}_{\text{inst}}(w,p,\textbf{x})\,.
\end{equation}
We refer the reader to \cite{Koroteev:2019tz} for more details and references. In this paper we discuss double scaling limits of DELL of Inozemtsev type \cite{inozemtsev1989}, in which the coupling $t\to\infty$ while the elliptic parameters $p$ and $w$ go to zero in the presence of additional scaling of the coordinates and momenta.

\subsection{DELL-RS-Calogero Hierarchy}
The DELL system lives on top of the hierarchy of integrable many-body systems and all other known models (without spin degrees of freedom, we shall comment on spin-DELL later) -- Calogero-Moser-Sutherland (CMS) and Ruijsenaars-Schneider (RS) systems can be obtained by decoupling certain parameters in DELL Hamiltonians, see \figref{fig:theiteptable}\footnote{In math literature (part of) this diagram is sometimes referred to as Etingof diamond.} .
\begin{figure}[h]
\includegraphics[scale=0.45]{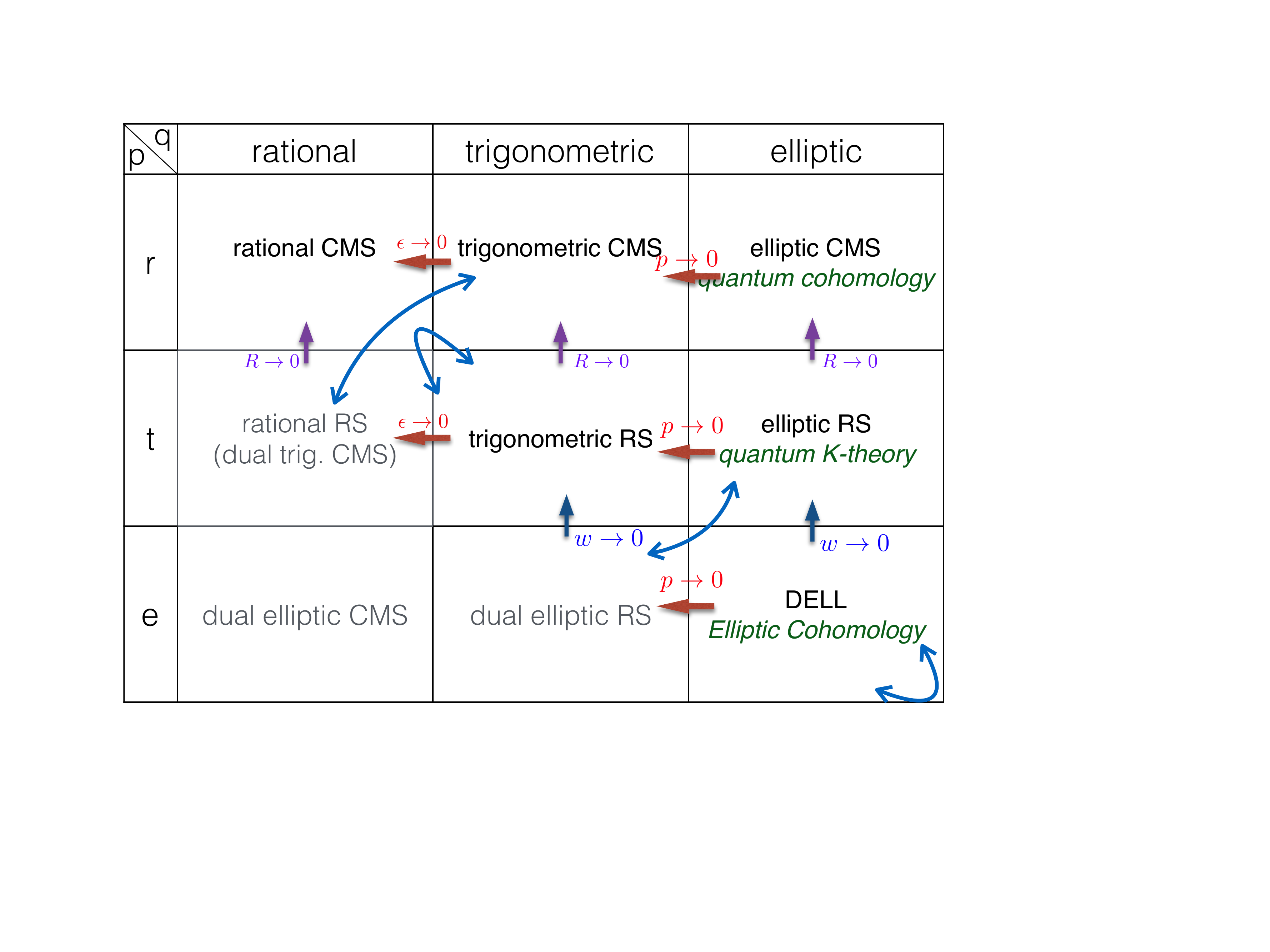}
\caption{The ITEP table \cite{Mironov:2000if} of integrable many body systems according to their periodicity properties in coordinates $q$ (columns) and momenta $p$ (rows) together with their geometric interpretations.}
\label{fig:theiteptable}
\end{figure}

The Calogero and Ruijsenaars families (first and second rows of the table respectively) have a well-established geometric interpretation.

\subsection{Spectrum of the Elliptic RS Model}
In this subsection we would like to prove one of the conjectures of \cite{Koroteev:2019tz}.

\begin{theorem}\label{Conj1}
Let $\textbf{x}=(x_1,\dots x_N)$ be the position vector of the eRS model and $\mathcal{Z}^{{\text{RS}}}(\textbf{a},\textbf{x})=\lim\limits_{w\to 0}\mathcal{Z}^{6d/4d}_{\text{inst}}(w,p,\textbf{x})$ is its wavefunction. Then the following equality holds
\begin{equation}\label{eq:eRSEigenproblemTh}
\mathcal{H}_k \mathcal{Z}^{{\text{RS}}}(\textbf{a},\textbf{x}) = \lambda_k (\textbf{a})\mathcal{Z}^{{\text{RS}}}(\textbf{a},\textbf{x})\,,\qquad k=1,\dots,N-1\,.
\end{equation}
where the eigenvalues read
\begin{equation}
\lambda_k(\textbf{a})=\prod\limits_{n = 0}^{k-1} \frac{\theta(t^{N-n})}{\theta(t^{n+1})} \cdot\frac{\mathcal{Z}^{{\text{RS}}}(\textbf{a},t^{\vec{\rho}}q^{\vec{\omega_k}})}{\mathcal{Z}^{{\text{RS}}}(\textbf{a},t^{\vec{\rho}})}\,,\qquad k=1,\dots, N-1
\end{equation}
where $\vec{\omega_k}$ is the $k$-th fundamental weight of representation of $SU(N)$ and $\vec{\rho}=\big( (N-1)/2, (N-3)/2, \dots, (3-N)/2,(1-N)/2\big)$ is the $SU(N)$ Weyl vector.
\end{theorem}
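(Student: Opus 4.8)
The plan is to take the limit $w\to 0$ inside the current \eqref{eq:Ocurrent} before touching the eigenvalue, and then to read off $\lambda_k$ by testing the resulting $q$-difference equation at the distinguished point $\textbf{x}=t^{\vec\rho}$. First I would note that the only $w$-dependence in \eqref{eq:Ocurrent} is the factor $w^{\sum n_i(n_i-1)/2}$, and that $n_i(n_i-1)/2\ge 0$ with equality precisely when $n_i\in\{0,1\}$; hence as $w\to 0$ only the multi-indices with all $n_i\in\{0,1\}$ survive. Writing such an index as the subset $I=\{\,i:n_i=1\,\}$, the Fourier mode $\mathcal{O}_k$ degenerates (up to the global sign $(-1)^k$ produced by $(-z)^{|I|}$) into a sum over $k$-element subsets with coefficients $\prod_{i<j}\theta(t^{n_i-n_j}x_i/x_j\mid p)$ and shift $\prod_{i\in I}p_i$, while $\mathcal{O}_0$ becomes multiplication by $\prod_{i<j}\theta(x_i/x_j\mid p)$. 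Thus $\mathcal{H}_k=\mathcal{O}_0^{-1}\mathcal{O}_k$ is exactly the $k$-th elliptic Ruijsenaars operator in the normalization of the statement.

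Next I would evaluate the identity \eqref{eq:eRSEigenproblemTh} at $x_i=t^{\rho_i}$, where $x_i/x_j=t^{j-i}$ for $i<j$. For a subset $I$ and a pair $i<j$ with $i\notin I$, $j\in I$ the coefficient carries the factor $\theta(t^{-1}x_i/x_j)=\theta(t^{\,j-i-1})$, which vanishes when $j=i+1$ because $\theta$ is odd and $\theta(1)=0$. Therefore every $I$ that is not an initial segment $\{1,\dots,k\}$ contributes zero, and at $t^{\vec\rho}$ the operator $\mathcal{O}_k$ reduces to the single shift $\prod_{i=1}^{k}p_i$, which maps $t^{\vec\rho}\mapsto t^{\vec\rho}q^{\vec{\omega_k}}$; the $GL_N$ and $SU(N)$ forms of $\vec{\omega_k}$ differ only by an overall rescaling of the $x_i$, which is immaterial since $\mathcal{Z}^{\text{RS}}$ depends on the coordinates through their ratios. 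Dividing the surviving coefficient by $\mathcal{O}_0|_{t^{\vec\rho}}=\prod_{i<j}\theta(t^{j-i})$, the factors internal to $\{1,\dots,k\}$ and to $\{k+1,\dots,N\}$ cancel and one is left with $\prod_{i=1}^{k}\prod_{j=k+1}^{N}\theta(t^{\,j-i+1})/\theta(t^{\,j-i})$; for each fixed $i$ this telescopes to $\theta(t^{\,N-i+1})/\theta(t^{\,k-i+1})$, and reindexing the product over $i$ reproduces exactly $\prod_{n=0}^{k-1}\theta(t^{N-n})/\theta(t^{n+1})$. Together with the residual wavefunction ratio $\mathcal{Z}^{\text{RS}}(\textbf{a},t^{\vec\rho}q^{\vec{\omega_k}})/\mathcal{Z}^{\text{RS}}(\textbf{a},t^{\vec\rho})$ this is the claimed $\lambda_k(\textbf{a})$, the leftover $(-1)^k$ being absorbed into the unconventional normalization already flagged in the abstract.

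The genuinely hard part is not this bookkeeping but the input it presupposes: that $\mathcal{Z}^{\text{RS}}$ is an honest joint eigenfunction of the $\mathcal{H}_k$ with an $\textbf{x}$-independent eigenvalue, since the evaluation at $t^{\vec\rho}$ only isolates that constant once its existence is granted. As the full DELL eigenvalue statement is still conjectural, I would not derive the eRS case from it; instead I would establish the eRS eigenfunction property directly by degenerating ($w\to 0$) the Nekrasov--Shatashvili defect construction of the five-dimensional partition function, which already solves the Ruijsenaars eigenvalue problem, or else verify the position-independence of the eigenvalue from the explicit series for $\mathcal{Z}^{\text{RS}}$ using the pertinent theta-function identities. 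This is where I expect the main obstacle to lie.
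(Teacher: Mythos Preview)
Your approach is essentially identical to the paper's: evaluate the eigenvalue equation at $x_i=t^{\rho_i}$, observe that every subset $I\neq\{1,\dots,k\}$ produces a factor $\theta(1)=0$, and telescope the surviving product $\prod_{i=1}^k\prod_{j=k+1}^N \theta(t^{j-i+1})/\theta(t^{j-i})$ to the stated prefactor. Your extra preliminary step (passing to $w\to 0$ in the current to recover the eRS Hamiltonians) is correct but not carried out in the paper, which simply writes down $\mathcal{H}_k$ directly; your remark about the stray $(-1)^k$ is fine, though the ``unconventional normalization'' line in the abstract refers to the Inozemtsev limit rather than to this.

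Your third paragraph is on point and worth keeping in mind: the paper's proof, like yours, \emph{presupposes} that $\mathcal{Z}^{\text{RS}}$ is a joint eigenfunction with $\textbf{x}$-independent eigenvalue and only extracts the value of $\lambda_k$ from that hypothesis; no independent argument for the eigenfunction property is supplied there either. So the gap you flag is real, but it is shared by the paper, which effectively treats \eqref{eq:eRSEigenproblemTh} as input inherited from the DELL conjecture (or from the five-dimensional defect construction) rather than as something established within the proof.
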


\begin{proof}
In the above system of difference equations, the Hamiltonians have the form
\begin{equation}
\mathcal{H}_k = \sum\limits_{\substack{\mathcal{I} \subset \{1 \ldots N\} \\ |\mathcal{I}|=k}} \ \prod\limits_{\substack{j \notin \mathcal{I} \\ i \in \mathcal{I}}} \dfrac{\theta( t x_i / x_j)}{\theta(x_i/x_j)} \ \prod\limits_{i \in I} p  _i
\end{equation}
where $p _i$ are the shift operators as defined earlier. While for generic $x_i$ this is a non-trivial system relating values of the eigenfunctions at multiple different points, there exists a specific value of $x_i$ at which all but one terms in the left hand side of each equation in the above system vanish. This is the point $x_i = t^{\rho_i} = t^{(N+1)/2-i}$. Indeed it is easy to see that, in the $k$-th equation of \eqref{eq:eRSEigenproblemTh}, the product of theta functions at $x_i = t^{\rho_i}$ necessarily contains a factor of $\theta(1)$ because of the linear dependence of $i$ in $\rho_i$, unless $I = \{1,2,\ldots,k\}$. Recall from \eqref{eq:ThetaOdd} that $\theta(1) = 0$, therefore the $k$-th equation of the system at $x_i = t^{\rho_i}$ specializes to
\begin{equation}
\prod\limits_{i = 1}^{k} \prod\limits_{j = k+1}^{N} \dfrac{\theta(t^{j-i+1})}{\theta(t^{j-i})} \ \mathcal{Z}^{{\text{RS}}}(\textbf{a},t^{\vec{\rho}}q^{\vec{\omega_k}}) = \lambda_k (\textbf{a})\mathcal{Z}^{{\text{RS}}}(\textbf{a},t^{\vec{\rho}})\,,\qquad k=1,\dots,N-1\,.
\end{equation}
where $\vec{\omega_k} = \big(\underbrace{1, \ldots, 1}_{k}, \underbrace{0, \ldots, 0}_{N-k}\big)$. Simplifying the product in the l.h.s. and dividing by the value of the eigenfunction in the r.h.s., we obtain the desired expression for the eigenvalues.

\end{proof}

\section{Double Inozemtsev Limits of DELL}
The usual Inozemtsev limit \cite{inozemtsev1989} describes a transition from elliptic integrable systems -- Calogero or Ruijsenaars (respectively the first and the second rows of the ITEP table) -- into Toda and q-Toda models correspondingly. More precisely, trigonometric models (second column of the table) are sent to open (q)Toda chains, while elliptic models (third column) become affine (q)Toda chains after applying Inozemtsev  (see \cite{Gorsky:2019yqp} for review). 

Upon the Inozemtsev limit on first scales the coordinates of Calogero or Ruijsenaars particles by a factor which is proportional to the order number of each particle and proportional to the coupling constant. The the coupling constant of the integrable system in question is sent to infinity while the elliptic modulus is sent to zero provided that a certain combination thereof remains finite (see below).

Since we are dealing with the double elliptic model which two elliptic parameters $p$ and $w$ one may try to build a more complex limit which involves scaling of the form $t\to\infty,\, p,w\to 0$ such that certain combination of the above three parameters $F(t,p,w)$ remains constant. We refer to such limits as \textit{double Inozemtsev limits}. 

\vskip.1in
More presicely the double Inozemtsev limit consists of the following rescaling
\begin{equation}\label{eq:scalingparam}
x_a\mapsto t^{-a} x_a,\qquad p_a\mapsto t^{-a-1/2} p_a,\qquad p=t^\alpha \Lambda, \qquad w = t^\beta M\,,
\end{equation}
and taking the limit $t\to 0$. Here $\Lambda$ and $M$ are nonzero constants. We study this limit for different values of $\alpha$ and $\beta$.

\subsection{{\bf $\alpha=N$, $\beta>1$}}
In this case, due to the quadratic dependence of the Hamiltonians on $w$, higher elliptic corrections vanish and the model becomes equivalent to the elliptic Ruijsenaars-Schneider model. Thus its Inozemtsev limit is the quantum affine q-Toda system \cite{Gorsky:2019yqp}. Its first Hamiltonian reads
\begin{equation}\label{eq:ElToda}
H^{\text{aff q-Toda}}_1=\sum_{i=2}^N \left(1-\frac{x_{i-1}}{x_i}\right)p_i +  \left(1-\Lambda\frac{x_N}{x_1}\right)p_1\,,
\end{equation}
Upon this limit $\mathcal{O}_0=1$.

\subsection{{\bf $\alpha=N$, $\beta=1$}}
This is a new limit which at first glance yields a new elliptic model. The normal ordered DELL operators $\mathcal{O}_k$ in this limit read as follows
\begin{equation}\label{eq:Oklim}
\widetilde{\mathcal{O}}_k = c_N : \sum_{i_1<\dots<i_k}\prod\limits_{a < b} \theta\left( M^{b-a+m_b - m_a} \frac{x_a}{x_b}\frac{p_b}{p_a}\Big\vert \widetilde p \right) :\ p_{i_1}\cdots p_{i_k}\,,
\end{equation}
where $\widetilde  p = \Lambda M^N$ and $m_a = \delta_{a \in I}$, and $c_N$ is a certain constant. However, the above Hamiltonians can undergo the following chain of transformations. First, let $x_a\mapsto M^{-a}x_a$, then we perform the $SL(2;\mathbb{Z})$ $T$-transformation which in the multiplicative form reads
\begin{equation}
x_a\mapsto x_a p_a, \,\qquad \qquad p_a\mapsto p_a\,.
\end{equation}
This transformation is none other than conjugation of the operators by the Gaussian term $\exp( \partial^2 /2 )$ (recall that the shift operator is $p = q^\partial$). After that \eqref{eq:Oklim} becomes
\begin{equation}\label{eq:Oklim1}
\widetilde{\mathcal{O}}_k = c_N \sum_{i_1<\dots<i_k}\prod\limits_{a < b} \theta\left( M^{m_b - m_a} \frac{x_a}{x_b}\Big\vert\widetilde  p \right) p_{i_1}\cdots p_{i_k}\,.
\end{equation}
According to \cite{Koroteev:2019tz} the above Hamiltonians lead to the elliptic Ruijsenaars-Schneider model $\mathcal{H}^{\text{eRS}}_k=\widetilde{\mathcal{O}}_0^{-1}\widetilde{\mathcal{O}}_k$ with coupling constant $M$ and elliptic parameter $\widetilde p$.

\vskip.1in

The above limit can be illustrated using the following brane description (see \figref{fig:quiver6d1}). Here vertically we have NS5 branes and horizontally we find D5 branes. The starting point is the little string theory with periodic NS5 direction which can be then scaled to the field theory limit by sending the vacuum expectation of the dilaton to infinity.
\begin{figure}[h]\label{fig:quiver6d1}
\includegraphics[scale=0.5]{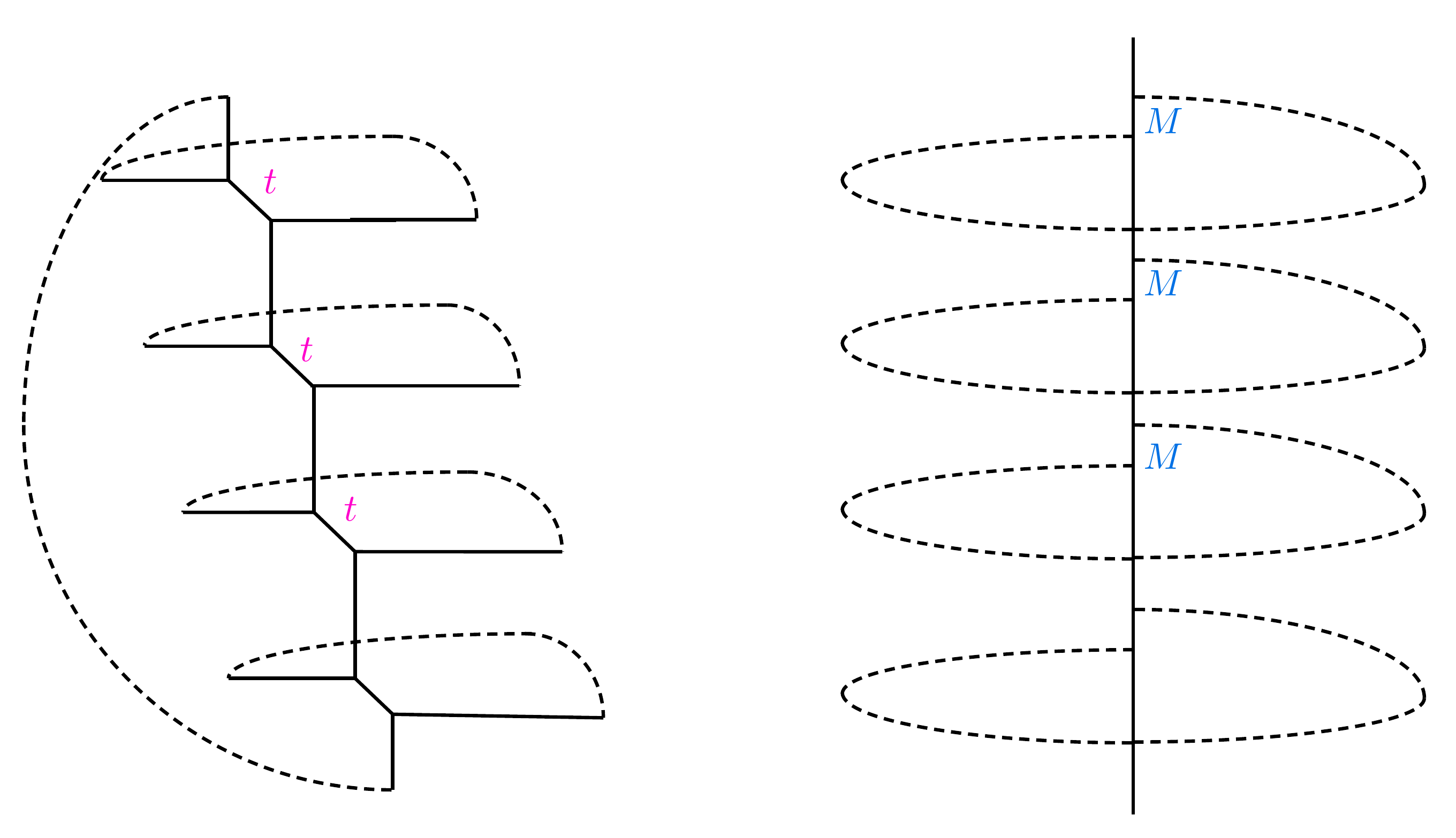}
\caption{Brane description of the 6d theory whose infrared regime is described by the DELL model with coupling $t$ (left) and the 5d theory whose Seiberg-Witten solution gives rise to the eRS model with coupling $M$ (right).}
\end{figure}
The left figure describes 6d $\mathcal{N}=(1,0)^*$ theory on $\mathbb{R}^4\times S^1_R\times S^1_{R'}$ which can be viewed as a 5d $\mathcal{N}=1^*$ theory on $\mathbb{R}^4\times S^1_{R'}$ whose Yang-Mills coupling depends on the compactification radius $g_{\text{YM}}^{-2}=1/R$ along the sixth direction. The complexified 5d gauge coupling is equal to $p\sim e^{-4\pi R'/R}$ while the elliptic modulus of the compactification torus is $w$. The $\mathcal{N}=2^*$ supersymmetry is softly broken to $\mathcal{N}=1^*$ by introducing the mass parameter to the adjoint hypermultiplet $t=e^{R' m}$. 

In the Inozemtsev limit $m\to\infty$ and $R\to 0$ such that $p t^N=\Lambda$ or, equivalently, $N m \sim 1/R$. This results in collapsing of one of the periodic directions in the 6d description. Then we apply a T-duality transformation along the compact NS5 brane direction which results in the NS5 brane with infinite radius as is shown on the right side of \figref{fig:quiver6d1}. Interestingly, during these chain of limits and transformations the former elliptic modulus parameter $w$ scalers to the new mass parameter $M$ while the 5d Yang-Mills gauge coupling of the limiting theory becomes $\widetilde  p = \Lambda M^N$.

\subsection{Fractional Limits}
Let us consider a different scaling
\begin{equation}\label{eq:scalingparam}
x_a\mapsto t^{-\frac{a}{\gamma}} x_a,\qquad p_a\mapsto t^{-\frac{a}{\gamma}-\frac{1}{2\gamma}} p_a,\qquad p=t^\alpha \Lambda, \qquad w = t^\beta M\,.
\end{equation}
Note that this scaling is not equivalent to the previous family via redefinition $u=t^{1/2}$ since the dependence on $w$ of DELL Hamiltonians is nonlinear. This scaling, as we shall see below, leads to q-Toda-like systems, albeit with certain restrictions.

\vskip.1in
Consider an element in the product inside theta-function
\begin{equation}
\theta(z|p)\supset \prod_{k=0}^\infty(1-p^k z)
\end{equation}
where in our case $z=t^{n_i-n_j}\frac{x_i}{x_j}$. After the rescaling the power of $t$ in the $k$-th term in the product  will become
\begin{equation}
P_{ij}=\alpha k + n_i - n_j + \frac{j-i}{\gamma}
\end{equation}
Our goal is to find the most singular terms in each theta function ratio thus we need to make sure that the above exponent is negative (or zero). This singular contributions will be compensated by $w$ in the expression for the DELL Hamiltonian. Consider the $w$-terms in more detail.
\begin{itemize}
\item $w^0$: This corresponds to vector $n_i=0$ or $n_i=1$ since the function $n(n-1)/2$ vanishes at $0$ and $1$ and otherwise is positively defined at integer $n$.

\item $w^1$: Notice that $n(n-1)/2=1$ for $n=-1$ and $n=2$. However, we need to satisfy the constraint $\sum n_i = k$ for $\mathcal{O}_k$. Consider $k=1$ for now. Thus we need $\sum_i n_i(n_i-1)/2=1$ as well. One can see that this can be achieved by having vector $\textbf{n}=(1,-1,1,0,\dots,0)$ and its permutations.
Therefore the values of $n_i-n_j$ in $P_{ij}$ range over $-2,-1,0,1,2$. In order to obtain singular terms we need the whole expression to be negative so $\gamma$ can be chosen in such a way that
$n_i - n_j + \frac{j-i}{\gamma}\leq 0$ or
\begin{equation}
\gamma \geq \frac{j-i}{2}
\end{equation}
In other words $\gamma = \frac{j-i}{2}$ defines the range of the interaction between the particles (the interaction becomes infinitely weak beyond this range). Once $\gamma$ is set we can adjust $\alpha$. For instance, $\gamma = 2, \alpha = 3/2, \beta=3/2$ will provide one term proportional to $M$ in the expression for $\mathcal{O}_1$.

\end{itemize}

\vskip.2in

The above analysis leads to the following operators for two and three particles.

\paragraph{{\bf Two Particles}}
Consider $\alpha=\beta=\frac{3}{2}$ in \eqref{eq:scalingparam}, then
\begin{align}
\mathcal{O}_0&=1-\Lambda  M\frac{ x_2}{x_1}\frac{p_1}{p_2} \,,\cr
\mathcal{O}_1&=  \left(1-\Lambda\frac{x_2}{x_1}+\Lambda ^3 M^2\frac{x_2^2}{x_1^2}\frac{ p_1 }{p_2}\right)p_1-\frac{x_1}{x_2}p_2 \,,
\end{align}

\paragraph{{\bf Three Particles}}
Consider $\alpha=\beta=2$, then
\begin{align}
\mathcal{O}_0&=1-\Lambda  M\frac{ x_3}{ x_1}\frac{ p_1}{p_3}\,,\cr
\mathcal{O}_1&= \left(1-\Lambda \frac{x_3}{x_1}\right)p_1-\frac{ x_1}{x_2}p_2+\Lambda ^2 M \frac{x_3^2}{x_1 x_2}\frac{p_1 p_2 }{p_3 }\,,\cr
\mathcal{O}_2&= \left(p_2-\Lambda\frac{
   x_3}{x_1}p_2-\frac{x_2}{x_3}p_3 \right)p_1+\Lambda ^2 M\frac{x_2 x_3}{x_1^2} p_1^2
\end{align}

However, for higher number of particles this limit produces Hamiltonians which only involve interaction with a fewer number of particles and completely ignore interactions with the others. Notice also that the above Hamiltonians are deformation of finite q-Toda chains (not affine).

\vskip.1in
Therefore at this moment we cannot find a novel limit of DELL with $N>3$ particles which would present a novel integrable system which lies `in between' eRS and DELL in the hierarchy. Most likely such model does not exist.

One may still be able to find a different scaling other than \eqref{eq:scalingparam} of the form
\begin{equation}
x_a\mapsto t^{f(a)} x_a,\qquad p_a\mapsto t^{g(a)} p_a,\qquad p=t^\alpha \Lambda, \qquad w = t^\beta M\,,
\end{equation}
where $f(a)$ and $g(a)$ are certain functions, in order to capture more terms from the DELL Hamiltonians in the limit. We encourage the reader to try.

\section{DELL System in the Dual Frame}
In this section we would like to comment on the recent results of \cite{Mironov:2021tx}. In particular, the authors conjecture the existence of a pq-self-dual DELL system, which, according to the reasons listed below, does not exist without invoking spin degrees of freedom (see
Fig \ref{fig:spindell} for the toric diagram).
\begin{figure}[h]\label{fig:spindell}
\includegraphics[scale=0.8]{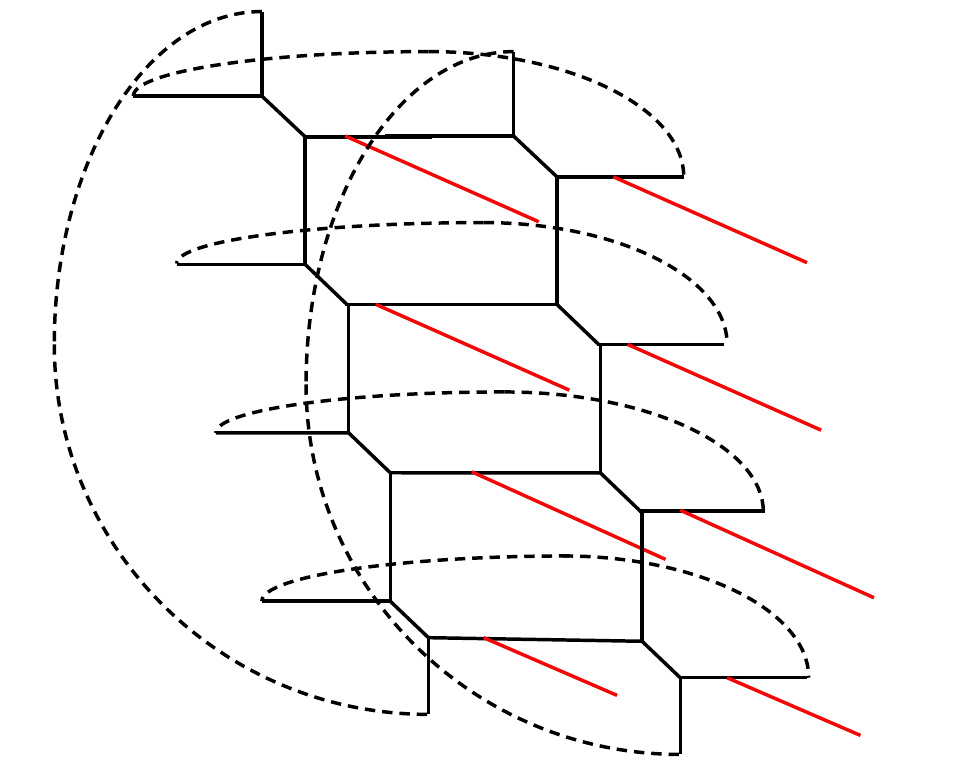} \quad  \includegraphics[scale=0.9]{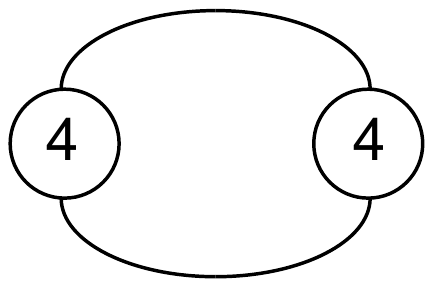}
\caption{Toric diagram for affine $\widehat{A}_1$ 6d quiver with two $U(3)$ nodes whose Seiberg-Witten geometry is described by spin-$\frac{1}{2}$ DELL system with four particles.}
\end{figure}

The pq-duality is manifested differently in physical setting in various dimensions. In 5d it is realized as the S-duality of the pq-brane web, while in 3d it is manifested by the 3d mirror symmetry. Since the description of the DELL model uses periodic pq-webs the pq-duality should also be manifested via the S-duality (90-degree rotation of the brane web). For instance, for the theory in \figref{fig:spindell} the pq-dual quiver is the following
\begin{figure}[h]
\includegraphics[scale=0.7]{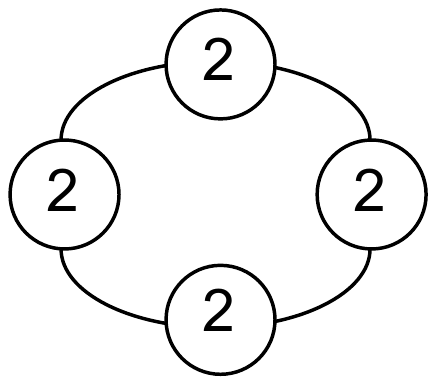}
\end{figure}
which yields spin-$\frac{3}{2}$ DELL model with two particles.

\vskip.1in

Every Hamiltonian of the Calogero-Ruijsennars-DELL family has two geometric/physics interpretations -- the so-called \textit{`electric frame'} (or $a$-frame) and the \textit{`magnetic frame'} (or $x$-frame). The two frames are related via the pq-duality which interchanges $a$ and $x$ variables together with the involution on the coupling constant parameter $\hbar\mapsto\hbar^{-1}$.

In addition to the pq-duality described above, in both electric and magnetic frames (see Fig. \ref{fig:dualityweb}) there are Fourier transforms which interchange momenta and coordinates in the Hamiltonians.

\begin{figure}[h]
\includegraphics[scale=0.6]{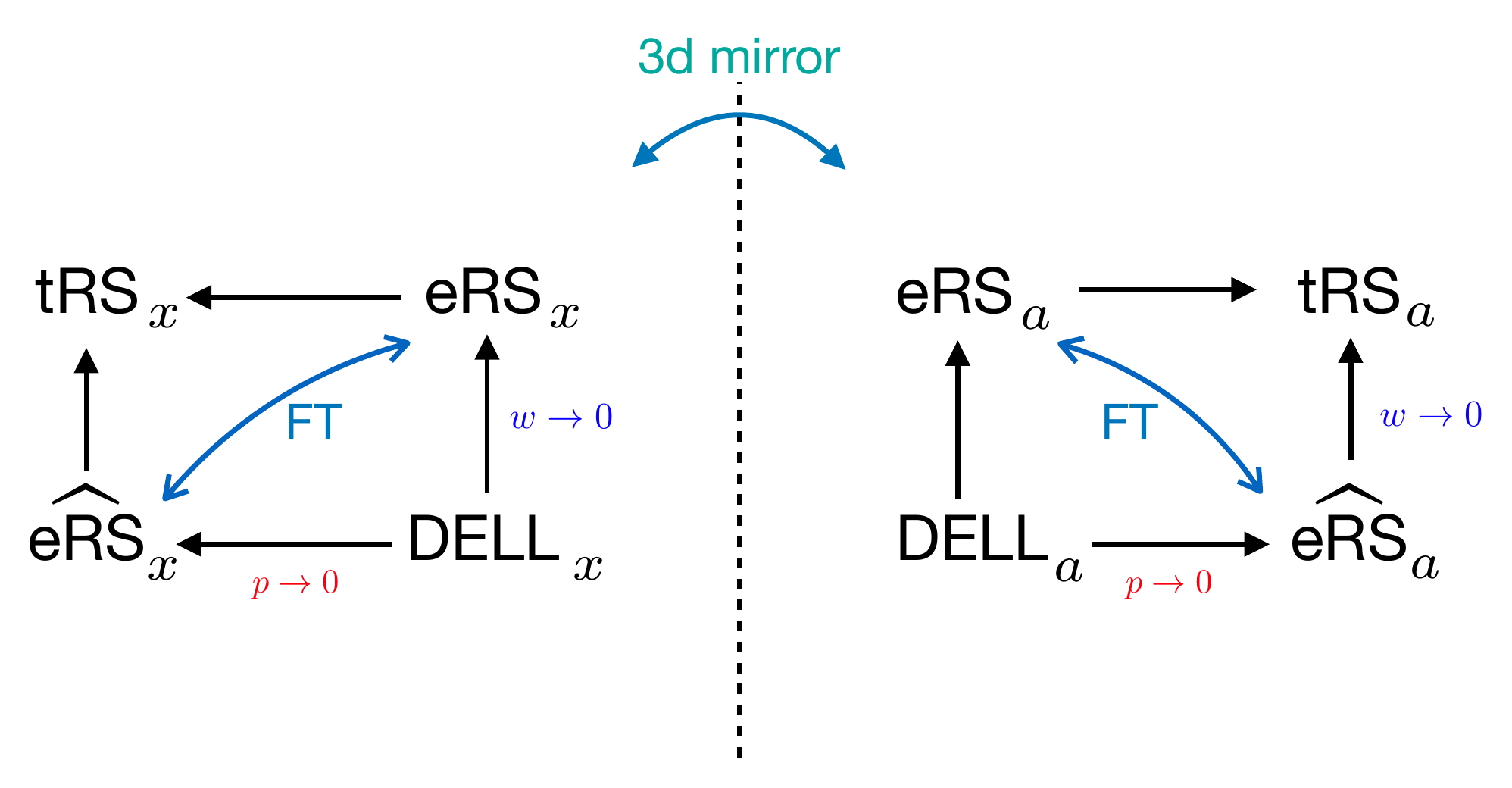}
\caption{Network of degeneration limits from the DELL models (up to the tRS models) whose Hamiltonians act on $x$-variables (DELL$_x$) and on $a$-variables (DELL$_a$). The two families (two dual ITEP tables, see Fig. \ref{fig:theiteptable}) are related via 3d mirror symmetry which interchanges parameters $x$ and $a$.}
\label{fig:dualityweb}
\end{figure}

Therefore the only self-dual family of double elliptic integrable models are spin-DELL systems whit spin-$\frac{N-1}{2}$ and $N$ particles, since their toric diagrams are pq-invariant. These models describe Seiberg-Witten geometry of affine $\widehat{A}_{N}$ quiver gauge theory with $U(N)$ gauge groups by means of the following generating function
\begin{align}
\widehat{ \mathcal{O^S}}(z) \ &= \ \sum\limits_{n \in {\mathbb Z}} \ \widehat{ \mathcal{O^S}}_n \ z^n  \notag\\
&=\sum\limits_{a=1}^N\sum\limits_{\textbf{n}_{a} \in \mathbb{Z}^N} \ (-z)^{\sum n_{a,i}} \ w^{\sum \frac{n_{a,i}(n_{a,i} - 1)}{2}} \ \prod\limits_{a=1}^N\prod\limits_{i < j} \theta\left( t^{n_{a,i} - n_{a,j}} \frac{x_{a,i}}{x_{a,j}} \Big\vert p\right) \ p_{a,1}^{n_{a,1}} \ldots p_{a,N}^{n_{a,N}}\,.
\label{eq:CurrentOzSpin}
\end{align}
We expect to study the properties of spin-DELL system in the near future.

Below we shall describe electric and magnetic frames for Ruijsenaars models in more details.

\subsection{3d Mirror Symmetry and pq-Duality in tRS models}
First, we remind the reader about the solutions of quantum trigonometric Ruijsenaars-Schneider model in electric and magnetic frames using quantum geometry of quiver varieties. At generic values of the equivariant parameters ${\bf a}$ the eigenfunctions are series. As we shall review below at certain resonance values \eqref{eq:truncationa} these series truncate to Macdonald polynomials which are most commonly discussed in the representation theory literature. The Pieri rule and its elliptic generalization, which were discussed in \cite{Mironov:2021tx}, will therefore follow from the tRS eigenvalue problem in the electric frame \eqref{eq:tRSEigen}, after the resonance conditions have been imposed.

\subsubsection{Magnetic Frame}
In the magnetic frame the tRS operators act on the K\"ahler parameters $\zeta_i$ of the quiver variety (in physics language on the Fayet-Iliopoulos couplings, see \cite{Bullimore:2015fr} for gauge theory interpretation of electric and magnetic frames).
\begin{definition}
The difference operators of trigonometric Ruijsenaars-Schneider model are given by
\begin{equation}
T_r(\boldsymbol{\zeta})=\sum_{\substack{\mathcal{I}\subset\{1,\dots,n\} \\ |\mathcal{I}|=r}}\prod_{\substack{i\in\mathcal{I} \\ j\notin\mathcal{I}}}\frac{\hbar\,\zeta_i-\zeta_j}{\zeta_i-\zeta_j}\prod\limits_{i\in\mathcal{I}}p_k \,,
\label{eq:tRSRelationsEl1}
\end{equation}
where $\boldsymbol{\zeta}=\{\zeta_1,\dots, \zeta_n\}$, the shift operator $p_k f(\zeta_k)=f(q\zeta_k)$.
\end{definition}

The spectrum of the tRS operators is described geometrically by K-theory vertex function for the cotangent bundle to the complete flag variety $T^*\mathbb{F}l_n$ \cite{Koroteev:2018a, Koroteev:2018cmp} and physically as vortex partition functions of the $T[U(n)]$ gauge theory on $\mathbb{R}^2\times S^1$ \cite{Bullimore:2015fr}. We remind the reader the definitions and theorems.

\begin{theorem}[\cite{Koroteev:2018cmp} Thm. 2.10]\label{Th:tRSeqproof}
Let $V^{(1)}_{\textbf{p}}$ be the coefficient for the vertex function for the cotangent bundle to the complete flag variety $T^*\mathbb{F}l_n$. Define
\begin{equation}
\mathsf{V}^{(1)}_{\textbf{p}}= \prod\limits_{i=1}^n\frac{\theta(\hbar^{i-n}\zeta_i,q)}{\theta(a_i\zeta_i,q)}\cdot V^{(1)}_{\textbf{p}}\,,
\label{eq:DefVvertMac}
\end{equation}
where $\theta(x,q)=(x,q)_\infty (qx^{-1},q)_\infty$ is basic theta-function and
\begin{equation}
V^{(1)}_{\textbf{p}}(z) = \sum\limits_{d_{i,j}\in C} \prod_{i=1}^{n-1} \left(t\frac{\zeta_{i}}{\zeta_{i+1}}\right)^{d_i} \prod\limits_{j,k=1}^{i}\frac{\left(q\frac{x_{i,j}}{x_{i,k}},q\right)_{d_{i,j}-d_{i,k}}}{\left(\hbar\frac{x_{i,j}}{x_{i,k}},q\right)_{d_{i,j}-d_{i,k}}}\cdot\prod_{j=1}^{i}\prod_{k=1}^{i+1}\frac{\left(\hbar\frac{x_{i+1,k}}{x_{i,j}},q\right)_{d_{i,j}-d_{i+1,k}}}{\left(q\frac{x_{i+1,k}}{x_{i,j}},q\right)_{d_{i,j}-d_{i+1,k}}}\,,
\label{eq:V1pdef}
\end{equation}
are the coefficient functions for the vertex function for fixed point ${\bf p}$.
Then $\mathsf{V}_{\textbf{p}}$ are eigenfunctions for tRS difference operators \eqref{eq:tRSRelationsEl1} for all fixed points $\textbf{p}$
\begin{equation}
T_r(\boldsymbol{\zeta}) \mathsf{V}^{(1)}_{\textbf{p}} = e_r (\mathbf{a}) \mathsf{V}^{(1)}_{\textbf{p}}\,, \qquad r=1,\dots, n\,,
\label{eq:tRSEigenz}
\end{equation}
where $e_r$ is elementary symmetric polynomial of degree $r$ of $a_1,\dots, a_n$\,.
\end{theorem}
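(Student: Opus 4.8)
The plan is to turn the claimed joint eigenvalue equation \eqref{eq:tRSEigenz} into a $q$-difference equation for the bare vertex $V^{(1)}_{\textbf{p}}$ and verify it against the explicit $q$-hypergeometric series \eqref{eq:V1pdef}. The first step is to record how the normalizing prefactor in \eqref{eq:DefVvertMac} transforms under the elementary shift $\zeta_k\mapsto q\zeta_k$. Writing $\Phi=\prod_{i=1}^n\theta(\hbar^{i-n}\zeta_i,q)/\theta(a_i\zeta_i,q)$ and using the quasi-periodicity $\theta(qx,q)=-x^{-1}\theta(x,q)$, only the $i=k$ factors move, and one finds the clean multiplicative law $p_k\Phi=a_k\hbar^{\,n-k}\,\Phi$. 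Since distinct shifts act on distinct variables and $\Phi$ factorizes, $\prod_{i\in\mathcal{I}}p_i\,\Phi=\big(\prod_{i\in\mathcal{I}}a_i\hbar^{\,n-i}\big)\Phi$, so that $T_r(\Phi V)=\Phi\,\widetilde T_r V$ where $\widetilde T_r=\sum_{|\mathcal{I}|=r}\prod_{i\in\mathcal{I},\,j\notin\mathcal{I}}\frac{\hbar\zeta_i-\zeta_j}{\zeta_i-\zeta_j}\prod_{i\in\mathcal{I}}a_i\hbar^{\,n-i}p_i$. Thus \eqref{eq:tRSEigenz} is equivalent to the statement that the bare vertex is annihilated by $\widetilde T_r-e_r(\mathbf{a})$.

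Next I would compute the action of the shifts on the series itself. Because the only Kähler dependence of the summand in \eqref{eq:V1pdef} is through $\prod_{i=1}^{n-1}(t\zeta_i/\zeta_{i+1})^{d_i}$, the operator $p_k$ multiplies the degree-$\mathbf{d}$ term by $q^{\,d_k-d_{k-1}}$, with boundary conventions $d_0=d_n=0$. The eigenvalue identity therefore reduces, monomial by monomial in $\mathbf{d}$, to a relation among the $q$-Pochhammer coefficients of \eqref{eq:V1pdef}: one expands the rational factors $\prod\frac{\hbar\zeta_i-\zeta_j}{\zeta_i-\zeta_j}$, rewrites the shifted Pochhammer ratios via the contiguity relation $(x;q)_{d+1}=(1-q^{d}x)(x;q)_d$, and matches against the coefficient of the same monomial. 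At a fixed point $\textbf{p}$ the Chern roots $x_{i,j}$ specialize to monomials in $\mathbf{a}$ and $\hbar$, so the coefficients become explicit rational functions and the matching becomes a finite identity at each degree.

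The main obstacle is the cross-subset cancellation: for a single $\mathcal{I}$ the prefactor of $\widetilde T_r$ is not symmetric, and it is only after summing over all $r$-subsets $\mathcal{I}\subset\{1,\dots,n\}$ that the spurious poles at $\zeta_i=\zeta_j$ cancel and the answer collapses to the symmetric constant $e_r(\mathbf{a})$. I would handle this as in the Macdonald-operator calculus: treat $\sum_{\mathcal{I}}$ as a partial-fraction/residue sum, use the factors $a_i\hbar^{\,n-i}$ generated by the prefactor to render the summand regular, and invoke the Bethe-type summation identifying the total residue with the elementary symmetric polynomial. Equivalently, and more structurally, one recognizes $\widetilde T_r$ as the $q$-difference operator governing the quantum (shift) difference equations of the vertex function for $T^*\mathbb{F}l_n$, whose monodromy is quantum multiplication by the exterior powers of the tautological bundle and whose spectrum on the fixed-point basis is $e_r$ of the equivariant parameters; the theta prefactor in \eqref{eq:DefVvertMac} is engineered precisely to strip off the non-symmetric abelian factor $a_k\hbar^{\,n-k}$ so that the eigenvalue emerges as the bare $e_r(\mathbf{a})$. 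This input is what guarantees that the degree-by-degree identity closes for all $\mathbf{d}$ and all fixed points $\textbf{p}$.
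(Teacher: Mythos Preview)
The paper does not supply its own proof of this statement: the theorem is quoted verbatim from \cite{Koroteev:2018cmp} (Theorem~2.10 there) and is followed immediately by Proposition~\ref{Prop:Truncation} with no intervening proof environment. So there is no in-paper argument to compare your proposal against.

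That said, a few comments on your sketch as a standalone outline. Your computation of the quasi-periodicity of the theta prefactor and the resulting conjugated operator $\widetilde T_r$ is correct, as is the identification of how $p_k$ acts on the K\"ahler monomials of the series. Where the sketch stops being a proof is exactly where you flag the ``main obstacle'': the cancellation over $r$-subsets that collapses the answer to $e_r(\mathbf{a})$. You describe two possible routes---a direct partial-fraction/residue argument in the Macdonald style, or an appeal to the quantum difference equation for the vertex function of $T^*\mathbb{F}l_n$---but you do not carry either one out. The second route is in fact the argument of \cite{Koroteev:2018cmp}: the vertex function is shown there to satisfy the qKZ/quantum difference equations, and the tRS operators arise as the scalar (abelianized) part of quantum multiplication by exterior powers of the tautological bundle, whose eigenvalues on fixed points are the $e_r(\mathbf{a})$. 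If you want a self-contained proof you would need to actually perform the partial-fraction summation, which for general $r$ and $n$ is a genuine combinatorial identity and not just a formality.
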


The following statement describes truncation of the eigenfunctions $\mathsf{V}^{(1)}_{\textbf{p}}$ from the above theorem. In the presence of certain resonance conditions these series become Macdonald polynomials.

\begin{proposition}[\cite{Koroteev:2018cmp} Prop. 2.11]
\label{Prop:Truncation}
Consider coefficient functions for K-theory of $\textsf{QM}$ to $X_n$  \eqref{eq:DefVvertMac} for all fixed points of the maximal torus. Let $\lambda$ be a partition of $k$ elements of length $n$ and $\lambda_1\geq\dots\geq\lambda_n$. Let
\begin{equation}
\frac{a_{i+1}}{a_{i}}=q^{\ell_{i}}\hbar\,,\quad \ell_i = \lambda_{i+1}-\lambda_{i}\,, \quad i=1,\dots,n-1\,.
\label{eq:truncationa}
\end{equation}
Then there exists a fixed point $\textbf{q}$ for which
\begin{equation}
\mathsf{V}_{\textbf{q}} = P_\lambda(\boldsymbol{\zeta};q,\hbar)\,.
\end{equation}
\end{proposition}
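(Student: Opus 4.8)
The plan is to prove the statement in three movements: first show that the specialization \eqref{eq:truncationa} collapses the infinite $q$-hypergeometric series $V^{(1)}_{\textbf{p}}$ to a Laurent polynomial in $\boldsymbol{\zeta}$ at one distinguished fixed point $\textbf{q}$; then identify that polynomial as a simultaneous eigenfunction of the Macdonald operators; and finally fix the normalization so that it is exactly the monic $P_\lambda$.

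\emph{Truncation.} The summand of \eqref{eq:V1pdef} is assembled from $q$-Pochhammer symbols $(x,q)_d=\prod_{k=0}^{d-1}(1-xq^k)$, and the $\boldsymbol{\zeta}$-dependence enters only through $\prod_i (t\,\zeta_i/\zeta_{i+1})^{d_i}$ with $d_i=\sum_j d_{i,j}$, so $V^{(1)}_{\textbf{p}}$ is a power series in the ratios $\zeta_i/\zeta_{i+1}$ graded by the total degrees $d_i$. The equivariant parameters enter as the top-level Chern roots $x_{n,j}=a_j$. I would single out the fixed point $\textbf{q}$ whose flag is filled in the order dictated by $\lambda$, so that the cross-level factors $(\hbar\, x_{i+1,k}/x_{i,j},q)_{d_{i,j}-d_{i+1,k}}$ acquire resonant arguments once \eqref{eq:truncationa} is imposed. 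One then checks that a numerator factor $1-(\cdots)q^k$ vanishes as soon as a summation index $d_{i,j}$ exceeds the threshold set by the differences $\lambda_i-\lambda_{i+1}$, so that the intersection of the summation cone $C$ with the non-vanishing locus is a finite box and $\mathsf{V}_{\textbf{q}}$ is a Laurent polynomial of the expected degree.

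\emph{Eigenfunction identification.} By Theorem~\ref{Th:tRSeqproof} the series satisfies $T_r(\boldsymbol{\zeta})\,\mathsf{V}^{(1)}_{\textbf{p}}=e_r(\mathbf{a})\,\mathsf{V}^{(1)}_{\textbf{p}}$, and the operators $T_r$ of \eqref{eq:tRSRelationsEl1} are, under the dictionary $(q,t)=(q,\hbar)$, precisely the Macdonald difference operators. Since the $T_r$ do not involve $\mathbf{a}$, the specialization commutes with the eigenvalue equation, so the truncated polynomial $\mathsf{V}_{\textbf{q}}$ is a polynomial simultaneous eigenfunction of all Macdonald operators. Integrating the resonance conditions gives $a_i=a_1\,q^{\lambda_i-\lambda_1}\hbar^{i-1}$, and with the free overall scale fixed to $a_1=\hbar^{1-n}q^{\lambda_1}$ this becomes $a_i=\hbar^{i-n}q^{\lambda_i}$; hence $e_r(\mathbf{a})$ coincides with the Macdonald spectrum $e_r(q^{\lambda_1}\hbar^{n-1},\dots,q^{\lambda_n}\hbar^{0})$ attached to $P_\lambda$. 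To conclude I would show $\mathsf{V}_{\textbf{q}}$ is symmetric with leading monomial $\boldsymbol{\zeta}^{\lambda}$, so that $\mathsf{V}_{\textbf{q}}=c\,m_\lambda+(\text{lower in dominance order})$; Macdonald's separation of eigenvalues ($\sum_i q^{\lambda_i}\hbar^{n-i}\neq\sum_i q^{\mu_i}\hbar^{n-i}$ for $\mu<\lambda$ at generic $q,\hbar$) then forces $\mathsf{V}_{\textbf{q}}=c\,P_\lambda$. The theta-ratio prefactor of \eqref{eq:DefVvertMac} is engineered so that at $a_i=\hbar^{i-n}q^{\lambda_i}$ the quotient $\theta(\hbar^{i-n}\zeta_i,q)/\theta(a_i\zeta_i,q)$ reduces by quasi-periodicity to the monomial $\zeta_i^{\lambda_i}$ up to a constant, supplying the leading term $\boldsymbol{\zeta}^{\lambda}$ with coefficient $1$ and thus $c=1$.

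The principal difficulty will be the truncation bookkeeping: because the indices $d_{i,j}$ at different levels of the flag are coupled through the cross-level Pochhammer symbols, one must show that the zeros propagate consistently down the whole flag and cut $C$ to a finite set of the correct cardinality, rather than merely bounding individual indices. A secondary subtlety is establishing the \emph{symmetry} of $\mathsf{V}_{\textbf{q}}$ in $\boldsymbol{\zeta}$, since an individual vertex coefficient need not be symmetric away from the resonant locus; this symmetry must be shown to emerge precisely at the special fixed point $\textbf{q}$, after which the eigenvalue–triangularity argument closes the proof.
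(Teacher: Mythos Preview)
The paper does not actually supply a proof of this proposition: it is quoted verbatim as \cite{Koroteev:2018cmp}~Prop.~2.11 and stated without argument, so there is no in-paper proof to compare your proposal against. Your three-step strategy---truncation at the resonance locus, transport of the eigenvalue equation from Theorem~\ref{Th:tRSeqproof}, and identification via Macdonald's uniqueness for polynomial eigenfunctions---is the natural route and matches the way such statements are typically established in the quantum-K-theory literature; the difficulties you flag (propagation of zeros through the nested flag indices, and emergence of $W$-symmetry only at the distinguished fixed point) are precisely the points that require care, but your outline is sound.
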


For instance, for $T^*\mathbb{P}^1$ the condition \eqref{eq:truncationa} reads $a_1=a q^{\lambda_1}\hbar^{-1}\,,a_2= a q^{\lambda_2}$. Then we have the following Macdonald polynomials for some simple tableaux.
\begin{align}
\mathrm{P}_{{\tiny\yng(1)}}&=\zeta _1+\zeta _2,\cr
\mathrm{P}_{{\tiny\yng(1,1)}}&=\zeta _1^2+\zeta _2^2+\frac{ (q+1) (\hbar -1)}{q \hbar -1}\zeta _1 \zeta _2,\cr
\mathrm{P}_{{\tiny\yng(1,1,1)}}&=\zeta _1^3+\zeta
   _2^3+\frac{ \left(q^2+q+1\right) (\hbar -1)}{q^2 \hbar -1}\zeta _2 \zeta _1^2+\frac{
   \left(q^2+q+1\right) (\hbar -1)}{q^2 \hbar -1}\zeta _2^2 \zeta _1\,.
\end{align}

\subsubsection{Electric Frame}
In the $a$-frame the tRS Hamiltonians read
\begin{equation}
T_r(\textbf{a})=\sum_{\substack{\mathcal{I}\subset\{1,\dots,n\} \\ |\mathcal{I}|=r}}\prod_{\substack{i\in\mathcal{I} \\ j\notin\mathcal{I}}}\frac{t\,a_i-a_j}{a_i-a_j}\prod\limits_{i\in\mathcal{I}}p_i \,,
\label{eq:tRSRelationsEl}
\end{equation}
where $\textbf{a}=\{a_1,\dots, a_{n}\}$, the shift operator $p_i f(a_i)=f(qa_i)$ and we denoted $t=\frac{q}{\hbar}$.

\begin{theorem}\label{Th:tRSEigen}
The following function constructed for the cotangent bundle to the complete flag variety
\begin{equation}
\mathrm{V}(\textbf{a},{\boldsymbol \zeta}) = \frac{e^{\frac{\log \zeta_n \sum_{i=1}^{n-1}\log a_i}{\log q}}}{2\pi i} \int\limits_{C} \prod_{m=1}^{n-1}\prod_{i=1}^{m} \frac{ds_{m,i}}{s_{m,i}} E(s_{m,i})\,\, e^{-\frac{\log \zeta_{m}/\zeta_{m+1} \cdot\log s_{m,i}}{\log q}} \cdot \prod_{j=1}^{\textbf{v}_{m+1}}H_{m,m+1}\left(s_{m,i},s_{m+1,j}\right)\,,
\label{eq:GenerictRSSolution}
 \end{equation}
where $E(s_{m,i})$ and $H_{m,m+1}$ are certain rational functions and contour $C$ is chosen in such a way that shifts of the contour $\textbf{s}\to q^{\pm 1} \textbf{s}$ do not encounter any poles,
satisfies tRS difference relations
\begin{equation}
T_r(\textbf{a}) \mathrm{V}(\textbf{a},{\boldsymbol \zeta}) = e_r ({\boldsymbol \zeta}) \mathrm{V}(\textbf{a},{\boldsymbol \zeta})\,, \qquad r=1,\dots, n
\label{eq:tRSEigen}
\end{equation}
where function $e_r $ is $r$th elementary symmetric polynomial $\zeta_1,\dots,\zeta_n$.
\end{theorem}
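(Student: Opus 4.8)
The plan is to verify the difference relations \eqref{eq:tRSEigen} directly, by acting with $T_r(\textbf{a})$ on the integral \eqref{eq:GenerictRSSolution} and showing that, after an admissible deformation of the contour $C$, the result equals $e_r(\boldsymbol{\zeta})$ times the same integral. The conceptual backbone is bispectral duality: the kernel $\mathrm{V}(\textbf{a},\boldsymbol{\zeta})$ is meant to be a common eigenfunction of the magnetic operators $T_r(\boldsymbol{\zeta})$ of \eqref{eq:tRSRelationsEl1} (with eigenvalue $e_r(\textbf{a})$, which is the content of Theorem~\ref{Th:tRSeqproof}) and of the electric operators $T_r(\textbf{a})$ of \eqref{eq:tRSRelationsEl} (with eigenvalue $e_r(\boldsymbol{\zeta})$), the two statements being exchanged by the 3d mirror/pq-duality of \figref{fig:dualityweb}. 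My goal is to establish the electric half from the integral representation.

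First I would isolate the $\textbf{a}$-dependence of the integrand. The variables $a_i$ enter only through the prefactor $e^{\log\zeta_n\sum_{i=1}^{n-1}\log a_i/\log q}$ and through the top-level fusion factors $H_{n-1,n}(s_{n-1,i},a_j)$ that glue the Gelfand--Tsetlin level $m=n-1$ to the boundary level carrying $\textbf{a}$. Applying $p_i:a_i\mapsto q a_i$ multiplies the prefactor by $\zeta_n$ when $i\le n-1$ and leaves it unchanged when $i=n$, so the shift operators alone already begin to assemble elementary symmetric combinations of the $\zeta$'s.

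Next, for each subset $\mathcal{I}$ appearing in $T_r(\textbf{a})$, I would absorb the shifts $a_i\mapsto q a_i$, $i\in\mathcal{I}$, by a compensating shift $s_{n-1,\cdot}\mapsto q^{\pm 1}s_{n-1,\cdot}$ of the top-level integration variables. This is precisely where the hypothesis on $C$ — that $\textbf{s}\to q^{\pm 1}\textbf{s}$ crosses no poles of $E$ and $H$ — is used: it licenses the contour deformation and returns the integrand to its canonical form up to an explicit ratio of $E$- and $H$-factors. The residual of this operation, multiplied by the tRS coefficient $\prod_{i\in\mathcal{I},\,j\notin\mathcal{I}}(t a_i-a_j)/(a_i-a_j)$ and by the $\zeta_n$-factors from the prefactor, is what must then be summed over $\mathcal{I}$.

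The crux is a single rational identity: the sum over $|\mathcal{I}|=r$ of \emph{tRS coefficient} times \emph{prefactor $\zeta$-factors} times \emph{residual $E,H$-ratio} must collapse to $e_r(\boldsymbol{\zeta})$ independently of all integration variables $s_{m,i}$, so that it factors out of the integral and leaves $e_r(\boldsymbol{\zeta})\,\mathrm{V}(\textbf{a},\boldsymbol{\zeta})$. I expect this to be the main obstacle: it is a trigonometric partial-fraction/interpolation identity of Macdonald--Ruijsenaars type, and handling all $r$ at once is cleanest via a generating-function (determinantal) packaging of the $T_r$, or, alternatively, by recognizing it as the statement that the integral kernel intertwines the electric and magnetic operators and reducing it to the already-established magnetic eigenrelation of Theorem~\ref{Th:tRSeqproof}. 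A final verification that the deformed contour may be returned to $C$ — again by the no-pole hypothesis — then completes the argument.
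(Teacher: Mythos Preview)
The paper does not actually give a proof of Theorem~\ref{Th:tRSEigen}. It is stated without a \texttt{proof} environment, sandwiched between Theorem~\ref{Th:tRSeqproof} and Proposition~\ref{Prop:Truncation}, both of which are quoted from \cite{Koroteev:2018cmp}; the electric-frame statement is presented in the same review spirit, with the surrounding text simply recording that 3d mirror symmetry/pq-duality exchanges it with the magnetic-frame result. So there is no in-paper argument to compare your proposal against.

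On the substance of your sketch: the strategy of pushing the $a_i\mapsto q a_i$ shifts onto the top-level integration variables and invoking the no-pole assumption on $C$ is the standard mechanism for these Mellin--Barnes-type eigenfunction proofs, and it is consistent with how the paper motivates the contour condition. The honest gap you yourself flag is the ``crux'' identity --- the collapse of the $\mathcal{I}$-sum to $e_r(\boldsymbol{\zeta})$ independently of all $s_{m,i}$. You do not prove it; you only list two possible attacks (a generating-function/determinantal packaging, or reduction via intertwining to the magnetic relation of Theorem~\ref{Th:tRSeqproof}). Either route is plausible, but as written this remains a plan rather than a proof: the identity is exactly where all the work lives, and without specifying $E$ and $H_{m,m+1}$ precisely and carrying out the cancellation, the argument is not complete. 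If you want to make this self-contained, the cleanest path is probably the second one you mention --- show that the integral kernel satisfies $T_r(\textbf{a})\,\mathrm{V}=T_r^{\vee}(\boldsymbol{\zeta})\,\mathrm{V}$ for a suitable conjugate of the magnetic operator, and then invoke Theorem~\ref{Th:tRSeqproof}.
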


\vskip.1in

The 3d mirror symmetry/pq-duality states that for the cotangent bundle to the complete flag variety the eigenfunctions and eigenvalues are related to each other via the interchange of the K\"ahler and equivariant parameters. Namely, for a certain fixed point of the maximal torus $\textbf{p}$ we have
\begin{equation}
V_{\textbf{p}}({\bf a},{\boldsymbol \zeta},\hbar) = V_{\textbf{p}}({\boldsymbol \zeta},{\bf a},q/\hbar)\,,
\end{equation}
and the tRS operators \eqref{eq:tRSRelationsEl1} and \eqref{eq:tRSRelationsEl} are related in the same way.

After imposing the resonance conditions \eqref{eq:truncationa} the tRS eigenvalue relations in the electric frame \eqref{eq:tRSEigen} turns into the Pieri rules (upon applying a conjugation to the tRS Hamiltonians) if we look at it from right to left. Consider the relation for $r=1$
\begin{equation}
\mathrm{P}_{{\tiny \yng(1)}} (q^{\lambda_{i}}\hbar^{n-i})\cdot \mathrm{P}_\lambda({\boldsymbol \zeta})= \sum_{i=1}^n\prod_{j\neq i}\frac{q^{\lambda_i-\lambda_j+1}\hbar^{j-i-1}-1}{q^{\lambda_i-\lambda_j}\hbar^{j-i}-1} \mathrm{P}_{\lambda+{\tiny \yng(1)\, }_i}({\boldsymbol \zeta})  \,,
\label{eq:tRSEigenPieri}
\end{equation}
where $\lambda+{\tiny \yng(1)\, }_i$ denotes a diagram obtained from the diagram $\lambda$ by adding one box in column $i$.

\subsection{PQ-duality in eRS models}
Analogously one can explain the pq-duality between RS$_a$  and $\widehat{\text{eRS}}_x$ from \figref{fig:dualityweb}. In \cite{Mironov:2021tx} the former eigenvalue problem was presented as an elliptic Pieri rule which can be understood as truncation of the eigenvector of $H^a_{eRS}$ which geometrically is the equivariant Euler characteristic of the affine Laumon space. 
The elliptic Pieri rule reads
\begin{equation}
\mathrm{E}_{{\tiny \yng(1)}} (p_k)\cdot \mathrm{E}_\lambda(p_k)= \sum_{i=1}^{l(\lambda)+1}\prod_{j\neq i}\psi\left(\frac{q^{\lambda_i}\hbar^{i-1}}{q^{\lambda_i}\hbar^{j-1}}\right) \mathrm{E}_{\lambda+{\tiny \yng(1)\, }_i}(p_k)  \,,
\label{eq:EllipticPieri}
\end{equation}
where $p_k=\sum_i z_i^k$ is the $k$-th power symmetric polynomial and function
$$
\psi(x)=\frac{\theta(q\hbar x|w)\theta(\hbar^{-1}x|w)}{\theta(x|w)\theta(q x|w)}
$$
in the limit $w\to 0$ reduces to the fraction in the right hand side of \eqref{eq:tRSEigenPieri} with an extra factor which is and artifact of different normalization of Ruijsenaars Hamiltonians in \cite{Mironov:2021tx}. 

Meanwhile the $\widehat{\text{eRS}}_x$ in \cite{Mironov:2021tx} was referred to as the eigenfunction of the DELL Hamiltonians in the elliptic-trigonometric limit ($w\to 0$).

\section{DELL and Separated Variables}
 One more approach to the Inozemtsev limit for DELL is based on the separation of variables -- the procedure 
effective for the interacting many-body systems. The system of $N$ interacting degrees of freedom
with a  simple phase-space manifold gets mapped via canonical transformation into the system 
of $N$ non-interacting degrees of freedom with complicated phase-space manifold. The 
derivation of the separated variables and the corresponding canonical transformation is
not a simple issue and it was identified in the brane terms \cite{separation} as the representation of the 
integrable model associated with the geometry  involving  several types of branes 
into a system of D0 branes on some
manifold via Fourier-Mukai transforms. More formally we map the initial phase space of the $N$-body holomorphic system $M$ into 
the phase space identified with Hilbert scheme of $N$ points on some two-dimensional complex manifold $X$: Hilb$^N(X)$.
This manifold can be considered as the mirror to some two dimensional Calabi-Yau manifold and the prepotential
attributed to the integrable system is identified with the partition function of the topological string on $X$.

Such approach has been applied for the two-body DELL system in \cite{Braden:2001yc}. It was argued in \cite{Braden:2001yc} that the two-body case the manifold $X$ is the two-torus bundled over another two-torus, where one torus modulus corresponds to the  
gauge coupling while the second corresponds to the elliptic modulus of the compactification torus. The Inozemtsev limit
in terms of the separation of variables corresponds to the peculiar degeneration of the manifold $X$.
The clear-cut example of resulting geometry for two-body holomorphic periodic q-Toda model can be read off from 
\cite{Hatsuda:2016mdw}. It was shown there that the energy level for the two-body system  
coincides with the mirror curve for the $\mathbb{P}_1\times \mathbb{P}_1$ manifold which in this case plays
the role of manifold $X$. In our paper we have discussed at the algebraic level the
possible intermediate degenerations of  two-torus bundled over two-torus manifold into $\mathbb{P}_1\times \mathbb{P}_1$.

Let us remark that it was argued in \cite{Hatsuda:2016mdw} that the periodic q-Toda Hamiltonian 
describes the particle on the lattice in the magnetic field where the lattice is modeled
by the cosine potential. The spectrum of this model enjoys the Hofstadter butterfly pattern.
However the generalization of such magnetic interpretation for interacting N-body system
in magnetic field is more complicated. For instance the Lauphlin wave function in the
continuum case is related with the wave function of  rational Calogero model in the external oscillatory potential 
which can be mapped into the trigonometric Calogero model \cite{Nekrasov:1997jf} describing the particles 
on the cylinder when the magnetic field gets mapped into the radius of the cylinder. It is this interacting system to which 
the Inozemtsev limit can be  applied and upon
the proper rescaling 
the trigonometric Calogero model reduces  into the open Toda chain. However it is not clear how this simple 
model of interacting particles in magnetic field
can be extended to the  elliptic case.

\section*{Acknowledgments}
The work of AG was supported by Basis Foundation grant 20-1-1-23-1 and by grant
RFBR-19-02-00214. PK is partially supported by AMS Simons Travel Grant.

\bibliography{cpn1}
\end{document}